\newtheorem{lemma}{Lemma}
\begin{document}
%
% paper title
% can use linebreaks \\ within to get better formatting as desired
% Do not put math or special symbols in the title.
\title{The equivalent identities of the MacWilliams identity for linear codes}
\author{Xiaomin Bao\\
       School of Mathematics and Statistics, \\
       Southwest University, \\
       Chongqing, 400715, P.R.China\\
       xbao@swu.edu.cn}% <-this % stops a space

% The paper headers
%\markboth{The equivalent identities of the MacWilliams identity for linear codes}%
%{Xiaomin Bao}

% make the title area
\maketitle

% As a general rule, do not put math, special symbols or citations
% in the abstract or keywords.
\begin{abstract}
We use derivatives to prove the equivalences between MacWilliams identity and its four equivalent forms,
and present new interpretations for the four equivalent forms. Our results explicitly give out the relationships between
MacWilliams identity and its four equivalent forms.
\end{abstract}
% Note that keywords are not normally used for peerreview papers.

{\bf Keywords} Linear code, MacWilliams identity, equivalent, derivative.

% For peer review papers, you can put extra information on the cover
% page as needed:
% \ifCLASSOPTIONpeerreview
% \begin{center} \bfseries EDICS Category: 3-BBND \end{center}
% \fi
%
% For peerreview papers, this IEEEtran command inserts a page break and
% creates the second title. It will be ignored for other modes.
%\IEEEpeerreviewmaketitle

\section{Introduction}
Let $\mathcal{C}$ be a $(n,k)$ linear code on the
field $F_q = GF(q)$ and let $\mathcal{C}^{\perp}$ be its dual code.
Define
$$W_{\mathcal{C}}^i:= \mbox{the number of codewords of weight $i$ in $\mathcal{C}$}$$
The homogeneous polynomial
\[
W_{\mathcal{C}}(x,y) := W_{\mathcal{C}}^0y^n + W_{\mathcal{C}}^1xy^{n-1} + \cdots + W_{\mathcal{C}}^nx^n
    = \sum_{i=0}^nW_{\mathcal{C}}^ix^iy^{n-i}
\]
is called weight enumerator of the code $\mathcal{C}$. The following identity is called the MacWilliams identity:
\begin{equation}\label{eq:macIdentity}
   W_{\mathcal{C}}(x,y):= \frac{1}{q^{n-k}}W_{\mathcal{C}^{\perp}}\left(x + (q-1)y,x - y\right)
\end{equation}
The following are four equivalent forms of the MacWilliams identity:
\begin{equation}\label{eq:Aj&Bj}
%\begin{split}
    W_{\mathcal{C}}^r = \frac{1}{q^{n-k}}\sum_{j=0}^n W_{\mathcal{C}^{\perp}}^j\sum_{i=0}^r(-1)^i\binom{n-j}{r-i}\binom{j}{i}(q-1)^{r-i},
     \quad r = 0,1,\cdots,n
%\end{split}
\end{equation}
\begin{equation}\label{eq:NoOfNoneZeroR-Tuple}
%\begin{split}
    \sum_{j=0}^n\binom{j}{r}W_{\mathcal{C}}^j = q^{k-r}\sum_{j=0}^n(-1)^j(q-1)^{r-j}\binom{n-j}{r-j}W_{\mathcal{C}^{\perp}}^j,
                                              \quad r = 0,1,\cdots,n
%\end{split}
\end{equation}
\begin{equation}\label{eq:NoOfZeroR-Tuple}
%\begin{split}
    \sum_{j=0}^n\binom{n-j}{r}W_{\mathcal{C}}^j = q^{k-r}\sum_{j=0}^n\binom{n-j}{r-j}W_{\mathcal{C}^{\perp}}^j,
                                                \quad r = 0,1,\cdots,n
%\end{split}
\end{equation}
\begin{equation}\label{eq:general}
%\begin{split}
\sum_{j=0}^n\binom{j}{t}\binom{n-j}{r-t}W_{\mathcal{C}}^j
=q^{k-r}\sum_{i=0}^t(-1)^i(q-1)^{t-i}\sum_{j=0}^r\binom{n-j}{r-j}\binom{j}{i}\binom{r-j}{t-i}W_{\mathcal{C}^{\perp}}^j,
0\leq t\leq r\leq n
%\end{split}
\end{equation}
The MacWilliams identities and the four equivalent forms have been studied by many authors
\cite{blahut,brualdi,goldwasser,inductionproof,mid,pless,probabilityproof,zierler}.
In 1963, MacWilliams~\cite{mid} proved that \eqref{eq:Aj&Bj}, \eqref{eq:NoOfNoneZeroR-Tuple}
and \eqref{eq:NoOfZeroR-Tuple} are all equivalent to ~MacWilliams identity~\eqref{eq:macIdentity}.
In 1983, by using a method different from that of \cite{mid}, Blahut~\cite{blahut} proved that
\eqref{eq:macIdentity} can be derived from ~\eqref{eq:NoOfZeroR-Tuple}. Similar method can also be used to derive
\eqref{eq:macIdentity} from \eqref{eq:NoOfNoneZeroR-Tuple}.
Identity \eqref{eq:general} was initially discovered by Brualdi et al in 1980 \cite{brualdi}, and they showed that \eqref{eq:general} can be derived  from~\eqref{eq:Aj&Bj}. In 1997,
Goldwasser~\cite{goldwasser} proved~\eqref{eq:general} by induction.

It should be pointed out that Brualdi et al presented interesting combinatorial
interpretations for \eqref{eq:NoOfNoneZeroR-Tuple},
\eqref{eq:NoOfZeroR-Tuple} and \eqref{eq:general} in~\cite{brualdi},
%According to their interpretations, both
%\eqref{eq:NoOfNoneZeroR-Tuple} and~\eqref{eq:NoOfZeroR-Tuple} are
%special cases of~\eqref{eq:general}.
but the interpretations do not indicate any explicit relationship between \eqref{eq:NoOfNoneZeroR-Tuple},
\eqref{eq:NoOfZeroR-Tuple}, \eqref{eq:general} and \eqref{eq:macIdentity}.

In the following section we will use derivatives to prove the
equivalence between anyone of \eqref{eq:Aj&Bj},
\eqref{eq:NoOfNoneZeroR-Tuple}, \eqref{eq:NoOfZeroR-Tuple},
\eqref{eq:general} and \eqref{eq:macIdentity}, our proofs also
unveil new relationships between MacWilliams identity and its equivalent forms.
%\eqref{eq:Aj&Bj},
%\eqref{eq:NoOfNoneZeroR-Tuple}, \eqref{eq:NoOfZeroR-Tuple},
%\eqref{eq:general} and~\eqref{eq:macIdentity}.
%----------------------------------
\section{Proofs of equivalences}
%----------------------------------
The following two lemmas are needed in our equivalence proofs:
\begin{lemma}\label{lemma:1}
Let $X = x+(q-1)y$,$Y = x-y$,$f = X^sY^t$, then for any non-negative
integers $l,m$ we have
\[
\begin{aligned}
\frac{\partial^{l}f}{\partial x^{l}} &= \sum_{i=0}^ll!\binom{s}{l-i}\binom{t}{i}X^{s-l+i}Y^{t-i}\\
\frac{\partial^{m}f}{\partial y^{m}} &=
\sum_{i=0}^m(-1)^i(q-1)^{m-i}m!\binom{s}{m-i}\binom{t}{i}X^{s-m+i}Y^{t-i}
\end{aligned}
\]
\end{lemma}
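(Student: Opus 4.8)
The plan is to exploit the fact that $X$ and $Y$ are linear in $x$ and $y$ with constant partial derivatives, so that differentiating $f = X^{s}Y^{t}$ is governed entirely by the chain rule. Concretely, $\partial X/\partial x = \partial Y/\partial x = 1$ while $\partial X/\partial y = q-1$ and $\partial Y/\partial y = -1$. The cleanest route is to treat $X$ and $Y$ themselves as new coordinates: since the linear map $(x,y)\mapsto(X,Y)$ has determinant $-q\neq 0$, it is invertible, so $\partial/\partial X$ and $\partial/\partial Y$ are well-defined commuting operators and the chain rule gives
\[
\frac{\partial}{\partial x}=\frac{\partial}{\partial X}+\frac{\partial}{\partial Y},\qquad
\frac{\partial}{\partial y}=(q-1)\frac{\partial}{\partial X}-\frac{\partial}{\partial Y}.
\]

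Next I would raise these operator identities to the $l$-th (respectively $m$-th) power. Because $\partial/\partial X$ and $\partial/\partial Y$ commute (equality of mixed partials), the binomial theorem applies to the operators themselves, yielding
\[
\frac{\partial^{l}}{\partial x^{l}}=\sum_{i=0}^{l}\binom{l}{i}\frac{\partial^{l-i}}{\partial X^{l-i}}\frac{\partial^{i}}{\partial Y^{i}},\qquad
\frac{\partial^{m}}{\partial y^{m}}=\sum_{i=0}^{m}\binom{m}{i}(q-1)^{m-i}(-1)^{i}\frac{\partial^{m-i}}{\partial X^{m-i}}\frac{\partial^{i}}{\partial Y^{i}}.
\]
Applying each summand to $f=X^{s}Y^{t}$ and using the single-variable power rule $\partial^{a}(X^{s})/\partial X^{a}=a!\binom{s}{a}X^{s-a}$ (which correctly returns $0$ when $a>s$, matching $\binom{s}{a}=0$) turns both sides into the stated sums. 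The accumulated constant collapses via $\binom{l}{i}(l-i)!\,i!=l!$, producing exactly the coefficient $l!\binom{s}{l-i}\binom{t}{i}$ and, for the $y$-derivative, its weighted analogue with the factor $(-1)^{i}(q-1)^{m-i}$.

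The main obstacle is one of justification rather than computation: I must be certain that promoting $X,Y$ to genuine coordinates is legitimate and that the operator binomial expansion is valid. The two facts that carry this are the invertibility of the change of variables (so $q\neq 0$ is genuinely used) and the commutativity of $\partial/\partial X$ and $\partial/\partial Y$ on the polynomial $f$. As an alternative that sidesteps the coordinate change, I could argue by induction on $l$ (and separately on $m$): the base case $l=0$ is immediate, and in the inductive step I apply $\partial/\partial x$ termwise through $\partial(X^{a}Y^{b})/\partial x=aX^{a-1}Y^{b}+bX^{a}Y^{b-1}$, then reindex and merge the two resulting sums using the absorption identity $\binom{s}{k}(s-k)=(k+1)\binom{s}{k+1}$ together with $(l-j+1)+j=l+1$ to recover the Pascal-type recurrence for $l!\binom{s}{l-i}\binom{t}{i}$. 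The $y$-derivative case is identical apart from carrying the constants $q-1$ and $-1$, which is precisely what generates the extra factor $(-1)^{i}(q-1)^{m-i}$.
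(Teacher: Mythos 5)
Your proposal is correct, and your primary route is genuinely different from the paper's. The paper proves only the $y$-identity (saying the $x$-case is similar) by direct induction on $m$: it differentiates the inductive formula once more, splits the result into two sums, shifts the index in the second, and merges them via the absorption identities $(s-m+1+i)\binom{s}{m-1-i}=(m-i)\binom{s}{m-i}$ and $(t-i)\binom{t}{i}=(i+1)\binom{t}{i+1}$ --- which is precisely the fallback induction you sketch in your last paragraph. Your main argument instead passes to the coordinates $(X,Y)$, writes $\partial/\partial x=\partial/\partial X+\partial/\partial Y$ and $\partial/\partial y=(q-1)\,\partial/\partial X-\partial/\partial Y$, and expands the $l$-th and $m$-th powers by the binomial theorem for commuting operators; the power rule $\partial^{a}X^{s}/\partial X^{a}=a!\binom{s}{a}X^{s-a}$ and the collapse $\binom{l}{i}(l-i)!\,i!=l!$ then yield both formulas in one stroke, with the degenerate cases $a>s$ handled automatically by the convention $\binom{s}{a}=0$. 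Your two justification points are exactly the right ones: the Jacobian determinant $-q\neq 0$ is what makes $\partial/\partial X$ and $\partial/\partial Y$ well-defined operators on polynomials in $x,y$ (were $X$ and $Y$ proportional, the monomials $X^{a}Y^{b}$ would be linearly dependent and the operators ill-defined), and equality of mixed partials licenses the operator binomial expansion. As for what each approach buys: your operator argument is shorter, treats the two identities uniformly (the $y$-case is the $x$-case with weights $q-1$ and $-1$), and makes the coefficient structure transparent rather than verified after the fact; the paper's induction is more elementary and self-contained, using nothing beyond one-variable differentiation and Pascal-type identities, at the cost of index-shifting bookkeeping and of writing out only one of the two cases.
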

\begin{lemma}\label{lemma:2}
Let $f(x,y)$ and $g(x,y)$ be two homogeneous polynomials of degree
$n$ in $x,y$. If
$$\left.\frac{\partial^rf}{\partial y^r}\right|_{x=1,y=0} = \left.\frac{\partial^rg}{\partial y^r}\right|_{x=1,y=0},\quad 0\leq r\leq n$$
or
$$\left.\frac{\partial^rf}{\partial y^r}\right|_{x=0,y=1} = \left.\frac{\partial^rg}{\partial y^r}\right|_{x=0,y=1},\quad 0\leq r\leq n$$
or
$$\left.\frac{\partial^rf}{\partial y^r}\right|_{x=y=1} = \left.\frac{\partial^rg}{\partial y^r}\right|_{x=y=1},\quad 0\leq r\leq n$$
then $f(x,y)=g(x,y)$.
\end{lemma}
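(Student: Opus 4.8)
The plan is to prove Lemma 2 by observing that a homogeneous polynomial of degree $n$ in $x,y$ is determined by its $n+1$ coefficients, so it suffices to show that the hypothesis forces $f$ and $g$ to have identical coefficients. Write $f(x,y) = \sum_{i=0}^n a_i x^i y^{n-i}$ and $g(x,y) = \sum_{i=0}^n b_i x^i y^{n-i}$, and set $h = f - g = \sum_{i=0}^n c_i x^i y^{n-i}$ with $c_i = a_i - b_i$. Since the hypotheses are linear in $f$ and $g$, each says exactly that the corresponding derivatives of $h$ vanish at the given point. So I would reduce the whole statement to a single claim: if $h$ is homogeneous of degree $n$ and $\partial^r h/\partial y^r$ vanishes at one of the three listed points for all $0 \le r \le n$, then $h = 0$, i.e. all $c_i = 0$.

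The key computation is $\partial^r h/\partial y^r$. Differentiating the monomial $x^i y^{n-i}$ exactly $r$ times in $y$ gives $\frac{(n-i)!}{(n-i-r)!}\, x^i y^{n-i-r}$ when $n-i \ge r$ and $0$ otherwise; equivalently the coefficient is $(n-i)(n-i-1)\cdots(n-i-r+1)$. The plan is to evaluate this at each of the three points and read off a triangular linear system in the unknowns $c_i$. At $x=1,y=0$, only the term with $n-i-r = 0$, i.e. $i = n-r$, survives (all higher powers of $y$ kill the rest, and the $y^0$ term is the only nonzero one), so $\partial^r h/\partial y^r|_{x=1,y=0} = r!\, c_{n-r}$. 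Hence the vanishing for each $r$ gives $c_{n-r} = 0$ directly, for $r = 0,1,\dots,n$, which is all coefficients. This handles the first point cleanly.

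For the second point $x=0,y=1$, evaluation kills every term with $i \ge 1$, leaving only $i=0$: $\partial^r h/\partial y^r|_{x=0,y=1} = \frac{n!}{(n-r)!}\, c_0$, which only yields $c_0 = 0$ and so appears too weak. I would therefore treat this case by symmetry: since $h$ is homogeneous, I can instead run an induction on $r$ from $r=n$ downward, or — cleaner — note that the three stated points are interchangeable by relabeling, and in fact the natural argument for the second point is an induction where $\partial^n h/\partial y^n|_{x=0,y=1}$ fixes $c_0$, then lower-order derivatives peel off $c_1, c_2, \dots$ in turn once the already-known coefficients are substituted back. The third point $x=y=1$ gives $\partial^r h/\partial y^r|_{x=y=1} = \sum_{i} \frac{(n-i)!}{(n-i-r)!}\, c_i$, a full (upper-)triangular system in the $c_i$ whose diagonal entries are the nonzero factorials $r!$ (from $i=n-r$), so it too is invertible and forces all $c_i=0$.

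I expect the main obstacle to be the second case, $x=0,y=1$: unlike the other two points it does not immediately produce a triangular system in one step, so I need the downward induction (or a symmetry reduction to the first case) to extract all coefficients rather than just $c_0$. Once the triangularity/inductive structure is set up correctly for each point, the conclusion $c_i = 0$ for all $i$, hence $f = g$, is immediate. I would present all three cases uniformly as "the hypothesis yields a triangular linear system with nonzero diagonal, whose only solution is $c_i \equiv 0$," flagging the ordering of elimination that each point demands.
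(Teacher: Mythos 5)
Your handling of the first and third cases is correct and coincides with the paper's own method: the paper proves only the $x=y=1$ case, writing out the equations $r!\sum_{i=r}^n\binom{i}{r}f_i = r!\sum_{i=r}^n\binom{i}{r}g_i$ for $r=n, n-1,\dots$ and solving by back substitution, which is exactly your ``triangular system with nonzero diagonal'' reading; your $(1,0)$ case is the even easier diagonal system, and the paper dismisses it (and the $(0,1)$ case) as ``proved similarly.''

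The genuine gap is your second case, and it is not fixable, because that case of the lemma is \emph{false as stated}. Your own computation already proves this: with $h = \sum_{i=0}^n c_i x^i y^{n-i}$, you correctly found $\left.\partial^r h/\partial y^r\right|_{x=0,y=1} = \frac{n!}{(n-r)!}\,c_0$ for \emph{every} $r$, so the hypotheses for $r=0,\dots,n$ are $n+1$ equations in the single unknown $c_0$; no equation involves $c_1,\dots,c_n$ at all. Hence the proposed downward induction that ``peels off $c_1, c_2,\dots$ once the already-known coefficients are substituted back'' has nothing to peel --- the system is rank one, not triangular under some other elimination order. The symmetry reduction fails for the same reason: swapping $x \leftrightarrow y$ carries the point $(0,1)$ to $(1,0)$ but simultaneously carries $\partial/\partial y$ to $\partial/\partial x$, so the second case is equivalent to ``$x$-derivatives at $(1,0)$,'' not to the stated first case. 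A concrete counterexample: $f(x,y) = x^n$, $g = 0$. Then $f(0,1)=0$ and $\partial^r f/\partial y^r \equiv 0$ for $1 \le r \le n$, so all hypotheses of the second case hold, yet $f \ne g$. The statement the paper presumably intended is $\left.\partial^r f/\partial x^r\right|_{x=0,y=1} = \left.\partial^r g/\partial x^r\right|_{x=0,y=1}$ for $0 \le r \le n$, which gives $r!\,c_r$ on each side and is proved verbatim by your case-one argument. Your instinct that the point $(0,1)$ ``appears too weak'' was exactly right; the correct conclusion was that the case is false as written (a defect the paper's ``proved similarly'' silently passes over), not that a cleverer ordering of elimination would rescue it.
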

\begin{proof}[Proof of Lemma \ref{lemma:1}]
We only prove the second identity, the first one can be proved
similarly.

If $m=0$, the result is obvious. Now let $m>0$, and suppose
\[
%\begin{aligned}
\frac{\partial^{m-1}f}{\partial y^{m-1}} = \sum_{i=0}^{m-1}(-1)^i(q-1)^{m-1-i}(m-1)!
                                         \binom{s}{m-1-i}\binom{t}{i}X^{s-m+1+i}Y^{t-i}
%\end{aligned}
\]
Then from $\frac{\partial^{m}f}{\partial y^{m}} =
\frac{\partial(\partial^{m-1}f/\partial y^{m-1})}{\partial y}$ we
can get
\[
\begin{aligned}
  \frac{\partial^{m}f}{\partial y^{m}} &= \sum_{i=0}^{m-1}(-1)^i(q-1)^{m-i}(s-m+1+i)(m-1)!
                                       \binom{s}{m-1-i}\binom{t}{i}X^{s-m+i}Y^{t-i}\\
                                       &\qquad +\sum_{i=0}^{m-1}(-1)^{i+1}(t-i)(q-1)^{m-1-i}(m-1)!
                                       \binom{s}{m-1-i}\binom{t}{i}X^{s-m+1+i}Y^{t-i-1}\\
                                       &=\sum_{i=0}^{m-1}(-1)^i(q-1)^{m-i}(m-1)!(m-i)
                                       \binom{s}{m-i}\binom{t}{i}X^{s-m+i}Y^{t-i}\\
                                       & \qquad +\sum_{i=0}^{m-1}(-1)^{i+1}(q-1)^{m-(i+1)}(m-1)!(i+1)
                                       \binom{s}{m-(i+1)}\binom{t}{i+1}X^{s-m+(i+1)}Y^{t-(i+1)}\\
                                       &=\sum_{i=0}^{m-1}(-1)^i(q-1)^{m-i}(m-1)!(m-i)
                                       \binom{s}{m-i}\binom{t}{i}X^{s-m+i}Y^{t-i}\\
                                       &\qquad +\sum_{i=1}^{m}(-1)^{i}(q-1)^{m-i}(m-1)!\,i
                                       \binom{s}{m-i}\binom{t}{i}X^{s-m+i}Y^{t-i}\\
                                       &=\sum_{i=0}^m(-1)^i(q-1)^{m-i}m!\binom{s}{m-i}\binom{t}{i}X^{s-m+i}Y^{t-i}
\end{aligned}
\]
The assertion follows by induction.
\end{proof}
\begin{proof}[Proof of Lemma \ref{lemma:2}]
We only prove the case of
\begin{equation}\label{eq:partialDir}
\left.\frac{\partial^rf}{\partial y^r}\right|_{x=y=1} =
\left.\frac{\partial^rg}{\partial y^r}\right|_{x=y=1},\quad 0\leq
r\leq n
\end{equation}
the other two cases can be proved similarly.

Let
$$f(x,y) = \sum_{i=0}^nf_{i}x^{n-i}y^i,\quad g(x,y) = \sum_{i=0}^ng_{i}x^{n-i}y^i$$
then from~\eqref{eq:partialDir} we can get the following equations:
\[
\begin{aligned}
    n!f_n &= n!g_n\\
    (n-1)!\sum_{i=n-1}^n\binom{i}{n-1}f_{i}&= (n-1)!\sum_{i=n-1}^n\binom{i}{n-1}g_{i}\\
    (n-2)!\sum_{i=n-2}^n\binom{i}{n-2}f_{i}&= (n-2)!\sum_{i=n-2}^n\binom{i}{n-2}g_{i}\\
    \vdots\qquad  & \qquad \vdots\\
    2!\sum_{i=2}^n\binom{i}{2}f_{i}&= 2!\sum_{i=2}^n\binom{i}{2}g_{i}\\
    \sum_{i=1}^n\binom{i}{1}f_{i}&= !\sum_{i=1}^n\binom{i}{1}g_{i}
\end{aligned}
\]
Solving these equations we get
$$f_n=g_n,f_{n-1}=g_{n-1},\cdots, f_1 = g_1,f_0=g_0$$
Therefore $f(x,y)=g(x,y)$.
\end{proof}
\subsection{Derive \eqref{eq:Aj&Bj} or \eqref{eq:NoOfNoneZeroR-Tuple} from \eqref{eq:macIdentity}}
By taking $r$-th partial derivative with respect to $y$ on both sides of \eqref{eq:macIdentity}, we get
\[
\begin{aligned}
 \sum_{j=0}^n r!\binom{j}{r}W_{\mathcal{C}}^jx^{n-j}y^{j-r}
 &= \frac{1}{q^{n-k}}\sum_{j=0}^n W_{\mathcal{C}^{\perp}}^j\sum_{i=0}^r(-1)^ir!\binom{n-j}{r-i}\binom{j}{i}
  (q-1)^{r-i}\left[x+(q-1)y\right]^{n-j-r+i}(x-y)^{j-i}
\end{aligned}
\]
\begin{itemize}
  \item Substituting $1$ for $x$, $0$ for $y$ in the above equation we get
  $$W_{\mathcal{C}}^r = \frac{1}{q^{n-k}}\sum_{j=0}^n W_{\mathcal{C}^{\perp}}^j\sum_{i=0}^r(-1)^i\binom{n-j}{r-i}\binom{j}{i}(q-1)^{r-i}$$
  So from \eqref{eq:macIdentity} we can derive \eqref{eq:Aj&Bj}.
  \item Substituting $1$ for both $x$ and $y$ we get
\[
\begin{aligned}
  \sum_{j=0}^n \binom{j}{r}W_{\mathcal{C}}^j&= \sum_{j=r}^n \binom{j}{r}W_{\mathcal{C}}^j\qquad (\mbox{if}\;j<r\;\mbox{then}\binom{j}{r}=0)\\
                                            &= \frac{1}{q^{n-k}}\sum_{j=0}^n W_{\mathcal{C}^{\perp}}^j(-1)^j\binom{n-j}{r-j}(q-1)^{r-j}q^{n-r}\\
                                            &= q^{k-r}\sum_{j=0}^n(-1)^j(q-1)^{r-j}\binom{n-j}{r-j}W_{\mathcal{C}^{\perp}}^j
\end{aligned}
\]
Therefore, from \eqref{eq:macIdentity} we can derive \eqref{eq:NoOfNoneZeroR-Tuple}.
\end{itemize}
\subsection{Derive \eqref{eq:NoOfZeroR-Tuple} from \eqref{eq:macIdentity}}
By taking $r$-th partial derivative with respect to $x$ on both sides of \eqref{eq:macIdentity}, we get
\[
\begin{aligned}
 \sum_{j=0}^{n} r!\binom{n-j}{r}W_{\mathcal{C}}^jx^{n-j-r}y^j
 &= \frac{1}{q^{n-k}}\sum_{j=0}^n W_{\mathcal{C}^{\perp}}^j\sum_{i=0}^rr!\binom{n-j}{r-i}\binom{j}{i}
 \left[x+(q-1)y\right]^{n-j-r+i}(x-y)^{j-i}
\end{aligned}
\]
Let $x=y=1$, then we get
\[
\begin{aligned}
  \sum_{j=0}^{n} \binom{n-j}{r}W_{\mathcal{C}}^j&%= \sum_{j=0}^{n-r} \binom{n-j}{r}W_{\mathcal{C}}^j
                                            = \frac{1}{q^{n-k}}\sum_{j=0}^n W_{\mathcal{C}^{\perp}}^j\binom{n-j}{r-j}q^{n-r}\\
                                            &= q^{k-r}\sum_{j=0}^n\binom{n-j}{r-j}W_{\mathcal{C}^{\perp}}^j
\end{aligned}
\]
So from \eqref{eq:macIdentity} we can derive \eqref{eq:NoOfZeroR-Tuple}.
\subsection{Derive \eqref{eq:general} from \eqref{eq:macIdentity}}
Let $f(x,y) = W_{\mathcal{C}}(x,y)$. For $0\leq t\leq r\leq n$,
by taking $r$-th mixed partial derivatives on both sides of
$$f(x,y) = \sum\limits_{j=0}^nW_{\mathcal{C}}^jx^{n-j}y^j$$
we can get
\[
\begin{aligned}
  \frac{\partial^r f}{\partial x^{r-t}\partial y^t} &= \frac{\partial^{r-t}}{\partial x^{r-t}}\left(t!\sum_{j=0}^n\binom{j}{t}W_{\mathcal{C}}^jx^{n-j}y^{j-t}\right)\\
      &= t!\;(r-t)!\sum_{j=0}^n\binom{j}{t}\binom{n-j}{r-t}W_{\mathcal{C}}^jx^{n-j-r+t}y^{j-t}
\end{aligned}
\]
From
$$f(x,y) =\frac{1}{q^{n-k}}\sum\limits_{j=0}^nW_{\mathcal{C}^{\perp}}^j\left[x+(q-1)y\right]^{n-j}(x-y)^j$$
and Lemma~\ref{lemma:1} we get
\[
\begin{aligned}
\frac{\partial^r f}{\partial x^{r-t}\partial y^t}
    %&=  \frac{1}{q^{n-k}}\frac{\partial}{\partial y^{t}}\left((r-t)!\sum_{j=0}^nW_{\mathcal{C}^{\perp}}^j\sum_{s=0}^{r-t}\binom{n-j}{r-t-s}\binom{j}{s}\left[x+(q-1)y\right]^{n-j-(r-t)+s}(x-y)^{j-s}\right)\\
    &=  \frac{1}{q^{n-k}}\;t!\;(r-t)!\sum_{j=0}^nW_{\mathcal{C}^{\perp}}^j\sum_{s=0}^{r-t}\binom{n-j}{r-t-s}\binom{j}{s}\\
    &\qquad \sum_{i=0}^{t}(-1)^i(q-1)^{t-i}\binom{n-j-r+t+s}{t-i}\binom{j-s}{i}
    \left[x+(q-1)y\right]^{n-j-r+s+i}(x-y)^{j-s-i}
\end{aligned}
\]
So we have
\[
\begin{aligned}
  \sum_{j=0}^n\binom{j}{t}\binom{n-j}{r-t}W_{\mathcal{C}}^jx^{n-j-r+t}y^{j-t}
  &= \frac{1}{q^{n-k}}\sum_{j=0}^nW_{\mathcal{C}^{\perp}}^j\sum_{s=0}^{r-t}\binom{n-j}{r-t-s}\binom{j}{s}\\
     &\qquad \sum_{i=0}^{t}(-1)^i(q-1)^{t-i}\binom{n-j-r+t+s}{t-i}\binom{j-s}{i}\\
     &\qquad\qquad \left[x+(q-1)y\right]^{n-j-r+s+i}(x-y)^{j-s-i}
\end{aligned}
\]
Substituting $1$ for $x$ and $y$, and also notice $(x-y)^{j-s-i}=0$ when $j\ne s+i$ we get
\[
\begin{aligned}
  \sum_{j=0}^n\binom{j}{t}\binom{n-j}{r-t}W_{\mathcal{C}}^j
  &= \frac{1}{q^{n-k}}\sum_{j=0}^rW_{\mathcal{C}^{\perp}}^j\sum_{i=0}^{t}\binom{n-j}{r-t-j+i}\binom{j}{j-i}
  (-1)^i(q-1)^{t-i}\binom{n-r+t-i}{t-i}q^{n-r}\\
  &= q^{k-r}\sum_{i=0}^{t}(-1)^i(q-1)^{t-i}\sum_{j=0}^r\binom{n-j}{r-j}\binom{j}{i}\binom{r-j}{t-i}W_{\mathcal{C}^{\perp}}^j
\end{aligned}
\]
So \eqref{eq:general} holds.
\subsection{Derive \eqref{eq:macIdentity} from \eqref{eq:Aj&Bj}}
Let
\[
\begin{aligned}
  f(x,y) &= W_{\mathcal{C}}(x,y) = \sum_{j=0}^nW_{\mathcal{C}}^jx^{n-j}y^{j}\\
  g(x,y) &= \frac{1}{q^{n-k}}W_{\mathcal{C}^{\perp}}\left(x + (q-1)y,x - y\right)\\
         &= \frac{1}{q^{n-k}}\sum_{j=0}^nW_{\mathcal{C}^{\perp}}^j\left[x+(q-1)y\right]^{n-j}(x-y)^j
\end{aligned}
\]
Then both $f(x,y)$ and $g(x,y)$ are homogeneous polynomials of degree $n$ in $x,y$.

For any non-negative integer $r\leq n$, by Lemma \ref{lemma:1} we have
\[
\begin{aligned}
  \left.\frac{\partial^rf}{\partial y^r}\right|_{\substack{x=1\\y=0}} &= r!W_{\mathcal{C}}^r\\
  \left.\frac{\partial^rg}{\partial y^r}\right|_{\substack{x=1\\y=0}} &= \frac{1}{q^{n-k}}\sum_{j=0}^nW_{\mathcal{C}}^j\,r!
                                                      \sum_{i=0}^r(-1)^i(q-1)^{r-i}\binom{n-j}{r-i}\binom{j}{i}
                                                      \left.\left[x+(q-1)y\right]^{n-j-r+i}(x-y)^{j-i}\right|_{\substack{x=1\\y=0}}\\
                                                      &= r!\frac{1}{q^{n-k}}\sum_{j=0}^nW_{\mathcal{C}^{\perp}}^j\sum_{i=0}^r(-1)^i(q-1)^{r-i}\binom{n-j}{r-i}\binom{j}{i}
\end{aligned}
\]
Since \eqref{eq:Aj&Bj} holds, we get
$$\left.\frac{\partial^rf}{\partial y^r}\right|_{\substack{x=1\\y=0}} = \left.\frac{\partial^rg}{\partial y^r}\right|_{\substack{x=1\\y=0}},\quad 0\leq r\leq n$$
By Lemma \ref{lemma:2} we obtain
\[
W_{\mathcal{C}}(x,y) = f(x,y)= g(x,y) = \frac{1}{q^{n-k}}W_{\mathcal{C}^{\perp}}\left(x + (q-1)y,x - y\right)
\]
\subsection{Derive \eqref{eq:macIdentity} from \eqref{eq:NoOfNoneZeroR-Tuple} or \eqref{eq:NoOfZeroR-Tuple}}
We only prove that from \eqref{eq:NoOfNoneZeroR-Tuple} we can derive \eqref{eq:macIdentity}. Let
\[
\begin{aligned}
  f(x,y) &= W_{\mathcal{C}}(x,y) = \sum_{j=0}^nW_{\mathcal{C}}^jx^{n-j}y^{j}\\
  g(x,y) &= \frac{1}{q^{n-k}}W_{\mathcal{C}^{\perp}}\left(x + (q-1)y,x - y\right)\\
         &= \frac{1}{q^{n-k}}\sum_{j=0}^nW_{\mathcal{C}^{\perp}}^j\left[x+(q-1)y\right]^{n-j}(x-y)^j
\end{aligned}
\]
Then both $f(x,y)$ and $g(x,y)$ are homogeneous polynomials of degree $n$ in $x,y$.
For any non-negative integer $r\leq n$, by Lemma~\ref{lemma:1} we get
\[
\begin{aligned}
  \left.\frac{\partial^rf}{\partial y^r}\right|_{\substack{x=1\\y=1}} &= r!\sum_{j=0}^n\binom{j}{r}W_{\mathcal{C}}^j\\
  \left.\frac{\partial^rg}{\partial y^r}\right|_{\substack{x=1\\y=1}} &= \frac{1}{q^{n-k}}\sum_{j=0}^nW_{\mathcal{C}}^j\,r!
                                                      \sum_{i=0}^r(-1)^i(q-1)^{r-i}\binom{n-j}{r-i}\binom{j}{i}
                                                      \left.\left[x+(q-1)y\right]^{n-j-r+i}(x-y)^{j-i}\right|_{\substack{x =1\\y= 1}}\\
                                                      &= r!\;q^{k-r}\sum_{j=0}^n(-1)^j(q-1)^{r-j}\binom{n-j}{r-j}W_{\mathcal{C}^{\perp}}^j
\end{aligned}
\]
From \eqref{eq:NoOfNoneZeroR-Tuple} we get
$$\left.\frac{\partial^rf}{\partial y^r}\right|_{\substack{x=1\\y=1}} = \left.\frac{\partial^rg}{\partial y^r}\right|_{\substack{x=1\\y=1}},\quad 0\leq r\leq n$$
By Lemma~\ref{lemma:2} we get $f(x,y) = g(x,y)$,
which means that
$$W_{\mathcal{C}}(x,y) = \frac{1}{q^{n-k}}W_{\mathcal{C}^{\perp}}\left(x + (q-1)y,x - y\right)$$
%%%%%%%%%%%%%%%%%%%%%%%%%%%%%%%%%%%%%%%%%%%%%%%%%%%%%%%%%%%%%%%%%%%%%
\subsection{Derive \eqref{eq:macIdentity} from \eqref{eq:general}}
%%%%%%%%%%%%%%%%%%%%%%%%%%%%%%%%%%%%%%%%%%%%%%%%%%%%%%%%%%%%%%%%%%%%%
If $t=0$ then \eqref{eq:general} reduces to \eqref{eq:NoOfZeroR-Tuple},
while if $t=r$ then \eqref{eq:general} reduces to \eqref{eq:NoOfNoneZeroR-Tuple}.
Since \eqref{eq:macIdentity} can be derived from \eqref{eq:NoOfNoneZeroR-Tuple} or \eqref{eq:NoOfZeroR-Tuple},
\eqref{eq:macIdentity} can also be derived from \eqref{eq:general}.
%%%%%%%%%%%%%%%%%%%%%%%
\section{Conclusion}
%%%%%%%%%%%%%%%%%%%%%%%
A homogeneous polynomial of degree $n$ in two variables is uniquely determined by its $n+1$ coefficients,
from the proofs in last section we can see that identities \eqref{eq:Aj&Bj}, \eqref{eq:NoOfNoneZeroR-Tuple},
\eqref{eq:NoOfZeroR-Tuple} and \eqref{eq:general} are actually four different groups of conditions that
can be used to determine the coefficients of \eqref{eq:macIdentity}, and they can be written respectively
in the following four forms:
\begin{equation}\tag{\ref{eq:Aj&Bj}$'$}\label{eq:Aj&Bjprime}
\left.\frac{\partial^{r}W_{\mathcal{C}}(x,y)}{\partial
y^r}\right|_{\substack{x=1\\y=0}} =
\left.\frac{\partial^{r}W_{\mathcal{C}^{\perp}}\left(x+(q-1)y,x-y\right)}{\partial
y^r}\right|_{\substack{x=1\\y=0}}
\end{equation}
\begin{equation}\tag{\ref{eq:NoOfNoneZeroR-Tuple}$'$}\label{eq:NoOfNoneZeroR-Tupleprime}
\left.\frac{\partial^{r}W_{\mathcal{C}}(x,y)}{\partial
y^r}\right|_{\substack{x=1\\y=1}} =
\left.\frac{\partial^{r}W_{\mathcal{C}^{\perp}}\left(x+(q-1)y,x-y\right)}{\partial
y^r}\right|_{\substack{x=1\\y=1}}
\end{equation}
\begin{equation}\tag{\ref{eq:NoOfZeroR-Tuple}$'$}\label{eq:NoOfZeroR-Tupleprime}
 \left.\frac{\partial^{r}W_{\mathcal{C}}(x,y)}{\partial x^r}\right|_{\substack{x=1\\y=1}} =
  \left.\frac{\partial^{r}W_{\mathcal{C}^{\perp}}\left(x+(q-1)y,x-y\right)}{\partial x^r}\right|_{\substack{x=1\\y=1}}
\end{equation}
\begin{equation}\tag{\ref{eq:general}$'$}\label{eq:generalprime}
\left.\frac{\partial^{r}W_{\mathcal{C}}(x,y)}{\partial
x^{r-t}\partial y^t}\right|_{\substack{x=1\\y=1}} =
\left.\frac{\partial^{r}W_{\mathcal{C}^{\perp}}\left(x+(q-1)y,x-y\right)}{\partial
x^{r-t}\partial y^t}\right|_{\substack{x=1\\y=1}}
\end{equation}
More equivalent forms of \eqref{eq:macIdentity} can be written out in this way.

\end{document}